\documentclass[acmtog, nonacm]{acmart}
\AtBeginDocument{%
  }

\usepackage{amsthm}
\usepackage{graphicx}
\usepackage{enumitem}
\newtheorem{theorem}{Theorem}

\usepackage[final]{pdfpages}

\begin{document}
\author{Dong Xiao}
\email{xiaodong@ustc.edu.cn}
    \affiliation{%
  \department{School of Mathematical Sciences}
  \institution{University of Science and Technology of China}
  \country{China}
}
\author{Renjie Chen}
\authornote{Corresponding author}
\email{renjiec@ustc.edu.cn}
\affiliation{%
  \department{School of Mathematical Sciences}
  \institution{University of Science and Technology of China}
 \country{China}
}
\author{Bailin Deng}
\email{dengb3@cardiff.ac.uk}
\affiliation{%
  \department{School of Computational and Mathematical Sciences}
  \institution{Cardiff University}
  \country{UK}
}





\title{Domain-Varying 2D Green's Functions for Cage-based Deformation}

\begin{abstract}
  In this work, we propose a novel theoretical view of cage-based deformation based on domain-varying Green’s functions and treat this domain as a new control space for the deformation effects. Harmonic Coordinates (HC)~\citep{2007Harmonic} and Green Coordinates (GC)~\citep{2008Green} are classic methods in cage-based deformation and serve as the theoretical foundation for shape editing in a range of practical deformation tools. Our method revisits these two classical approaches. Specifically, we propose a framework based on Green’s functions across diverse domains (independent of the cage-enclosed domain) to unify these two techniques. To our knowledge, this represents the first such attempt in nearly two decades. Based on this perspective, we propose a novel cage-based deformation technique that introduces a new control space and utilizes domain-varying Green’s functions to yield varying deformation effects. Our method also establishes a continuous transition of effects from HC to GC as the Green’s function domain $\Theta$ expands from the cage region $\Omega$ to the entire $\mathbb{R}^2$.  We call our method Domain-Varying Green Coordinates (DVGC). When $\Theta$ is a disk or a rectangle, the Green’s function possesses analytic or semi-analytic expressions, respectively, enabling the DVGC to be computed without finite element discretization. Furthermore, when $\Theta$ is a disk, the DVGC admit a closed-form expression for 2D simplicial cages, thereby eliminating the need for numerical integration. Experiments demonstrate that our method provides a novel control space ranging from `more consistent with the cage’ to `more shape-preserving’, generating diverse deformation effects by varying the Green’s function domains. The source code is available at~\url{https://github.com/Submanifold/DVGC}.
\end{abstract}

\begin{CCSXML}
<ccs2012>
 <concept>
  <concept_id>10010147.10010371.10010352.10010381</concept_id>
  <concept_desc>Computing methodologies~Shape modeling</concept_desc>
  <concept_significance>500</concept_significance>
 </concept>
 <concept>
  <concept_id>10010147.10010371.10010352.10010382</concept_id>
  <concept_desc>Computing methodologies~Geometric deformation</concept_desc>
  <concept_significance>500</concept_significance>
 </concept>

</ccs2012>
\end{CCSXML}

\ccsdesc[500]{Computing methodologies~Shape modeling}
\ccsdesc[500]{Computing methodologies~Geometric deformation}

\keywords{Cage-based deformation, Green's function, Green's third identity, Harmonic Coordinates}

\begin{teaserfigure}
    \centering
    \includegraphics[width=\textwidth]{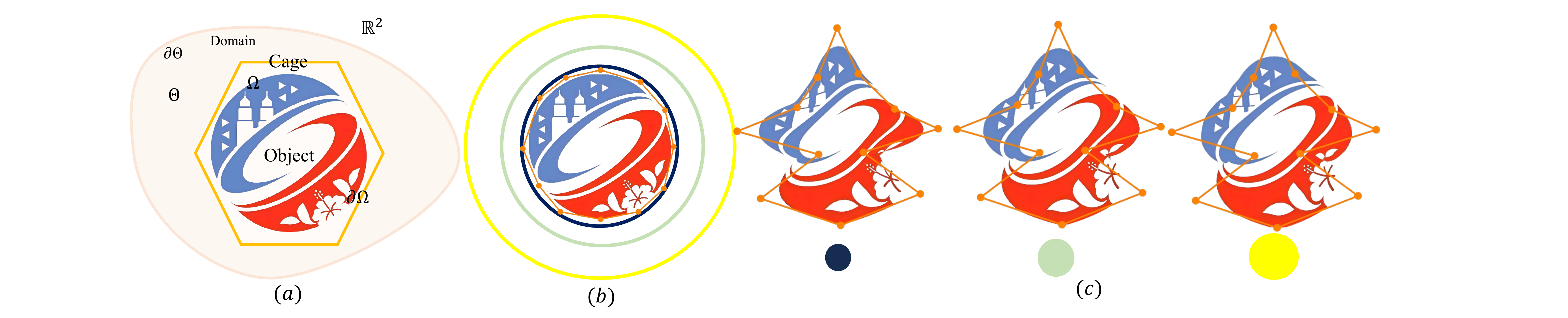}
    \caption{Illustration of our method. (a) The positional relationship among the object, the cage domain $\Omega$, and the Green's function domain $\Theta$. (b) The same source cage (orange polygonal chains) with different Green's function domains (blue, green, and yellow disks of varying radii; only their boundaries are shown). (c) Different deformation results obtained using the same target cage with varying Green's function domains.}
    \label{fig:teaser}
\end{teaserfigure}

\maketitle

\section{Introduction}
Flexible and user-friendly shape editing is a critical problem in computer graphics, especially in character animation. A representative technique is cage-based deformation, where users manipulate a coarse cage to achieve the shape design of the embedded object. The core step involves establishing the deformation coordinates, which express the position of a point inside the cage as a linear combination of source cage vertices or both vertices and normals, forming a partition of unity. Subsequently, the deformed object can be represented using these coordinates together with the target cage. Many methods exist for constructing the deformation coordinates ~\cite{2017HormannBarycentric, 2024Survey}, with each method exhibiting distinct effects and properties. Among these techniques, Mean Value Coordinates (MVC)~\citep{2005MVC} and Harmonic Coordinates (HC)~\citep{2006GeneralConstructBC, 2007Harmonic} are established based on the cage vertices, with deformation results that are more consistent with the target cage. Green Coordinates (GC)~\cite{2008Green} incorporate cage normals into the deformation coordinates, yielding shape-preserving results, which manifest as angle preservation in planar deformations. 

Both HC and GC are grounded in harmonic function theory. Specifically, HC are established by solving the Dirichlet boundary-value problem of the Laplace equation, whereas GC utilizes Green’s third identity with boundary integrals to achieve linear reproduction for interior points. Consequently, HC lack normal control, resulting in obvious shearing artifacts and faceted creasing, but yield deformation effects that adhere more strictly to the target cage. Given that cages are often coarse structures, using simplicial cages may cause the deformation results of HC to become excessively sharp or segmented. Consequently, shape preservation is an important property to consider. GC possesses normal control and maintains angle preservation with respect to the source object for 2D deformation; however, the deformation results of GC sometimes deviate noticeably from the target cage, as the angle-preserving condition is too restrictive. Despite the distinct deformation behaviors of HC and GC, we propose, to the best of our knowledge, the first unified framework to reconcile these approaches based on the generalized Green’s identity, utilizing various Green function domains $\Theta$ that encompass the cage boundary domain $\Omega$. We demonstrate that HC and GC correspond to the specific cases of $\Theta=\Omega$ and $\Theta=\mathbb{R}^2$, respectively. 

Based on the preceding observations, we propose a novel control space that leverages varying domains $\Theta$ to achieve diverse deformation effects ranging from `more consistent with the target cage’ to `more locally shape-preserving with the source object’ between HC and GC. Fig.~\ref{fig:teaser} illustrates the positional relationships among the object, the cage domain, and the Green’s function domain, as well as the core contribution of our method. Technically, our method establishes a \textbf{domain-generalized Green's third identity}, where the Green’s function is not limited to the fundamental solution of the Laplace equation, but is rather the Green’s function on any domain $\Theta$ satisfying $\Omega \subseteq \Theta \subseteq \mathbb{R}^2$. We utilize this formulation to establish deformation coordinates and designate this method as the \textbf{Domain-Varying Green Coordinates (DVGC)}; we further demonstrate that GC and HC are both special instances of DVGC.

Moreover, when $\Theta$ is a disk or a rectangle, its Green's function admits analytic or semi-analytic expressions, respectively. Therefore, we select gradually expanding regions, ranging from the smallest region enclosing $\Omega$ to the entire $\mathbb{R}^2$, thereby achieving a transition effects from HC to GC using the same set of source and target cages. Furthermore, when $\Theta$ is a disk, we can also derive the closed-form expression of DVGC, facilitating fast and accurate deformation and eliminating the need for numerical integration. We conduct extensive experiments on a wide variety of 2D cages; for each example, we demonstrate the results of the same source and target cages with different $\Theta$. Experimental results show that our method yields diverse deformation effects, providing enhanced flexibility to achieve the desired outcomes, whether prioritizing cage adherence or shape preservation. 
\section{Related Work}
Cage-based deformation is a well-established paradigm for interactive shape editing, widely utilized in both animation and digital character modeling for its intuitive control ability. A recent survey~\citep{2024Survey} provides a comprehensive summary of the main techniques of the last few decades. The choice of cage geometry (e.g., simplicial cages, quadrilateral, Bézier surfaces) and coordinate family (e.g., Generalized Barycentric Coordinates, coordinates with normal control) collectively determine the expressiveness and visual behavior of the deformation. 

\subsection{Generalized Barycentric Coordinates}
Generalized Barycentric Coordinates (GBC) \citep{2015ReviewBarycentric, 2017HormannBarycentric, 2019GeneralBarycentric, 2023VBC} primarily establish a partition-of-unity representation for positions interior to the cage and reproduce affine functions based on the cage vertices, thereby providing intuitive control over the embedded object. Closed-form Mean-Value Coordinates (MVC) are a representative type of GBC. The 2D MVC is first introduced by \citet{2003MVCPoly}, initially restricted to star-shaped polygons. The extension to arbitrary polygons is later established by \citet{2006MVCArbitrary}. In 3D, MVC are derived via the spherical projection of the cage structure \citep{2005MVCPoly, 2005MVC}. While this approach ensures computational simplicity, it suffers from the drawback of generating negative weights for concave cage regions. Subsequent work~\citep{2007PMVC} has addressed this issue by sacrificing a certain degree of smoothness. Poisson coordinates~\citep{2013Poisson} generalize MVC via the Poisson integral formula, and provide a new control parameter for MVC based on the center of the Poisson kernel. Harmonic Coordinates (HC) ensure positive weights for non-convex regions by constructing the coordinates via the boundary Dirichlet problem of the Laplace equation. This formulation first appears in \citet{2006GeneralConstructBC} and is further rigorously analyzed by \citet{2007Harmonic} in terms of mathematical properties, and is generalized for broader applications. However, HC do not possess a closed-form solution and are usually solved using the finite element method. Maximum Entropy Coordinates \citep{2008MaxEntropy} and Maximum Likelihood Coordinates \citep{2023MaximumLikelihood} are established based on entropy theory; these coordinates are non-negative and smooth, though they usually lack closed-form expressions. \citet{2024Stochastic} provide a unified framework for different cage representations and formulate the deformation task as a weighted least-squares minimization problem, solving it via the Monte Carlo integration.

\subsection{Deformation coordinates with normal control}
GBCs mainly produce affine transformations based solely on the cage vertices, which may produce shearing artifacts of the deformed objects. Furthermore, when using simplicial cages, overly sharp or segmented shapes may also appear. In contrast, normal-controlled coordinates~\citep{2008Green, 2009DriveGreen, 2012Biharmonic, 2023Somigliana} excel at preserving the local shape of the source object and preventing large shearing artifacts by incorporating boundary normal information into the deformation formulation. A typical method is Green Coordinates (GC)~\cite{2008Green, 2022QuadGreen, 2023PolyGreen, 2025BezierCage, 2025Polynomial3D}, which leverages Green’s third identity to produce a boundary integral formulation with both vertex and normal contributions. This formulation naturally ensures the linear reproduction and the partition of unity of the Dirichlet term. In the complex plane, Cauchy Coordinates~\cite{2009CBC} can be regarded as equivalent to GC, highlighting the analytic nature of this formulation. Green coordinates produce angle-preserving deformation in 2D and quasi-conformal deformation in 3D. However, Green coordinates cannot simultaneously satisfy both given Dirichlet and Neumann boundary conditions. Moreover, angle preservation is a strict constraint; consequently, the resulting deformations often deviate significantly from the target cage. Furthermore, the deformation effects of GC are uniquely determined by the source and target cage positions, lacking additional parameter space to control shape preferences. Somigliana coordinates \cite{2023Somigliana}, derived from linear elasticity, feature normal control and incorporate additional parameters to control local shape area; however, the deformed shape may still deviate noticeably from the target cage. Another class of normal-controlled formulation, known as Biharmonic Coordinates (BiC)~\cite{2012Biharmonic, 2024Biharmonic3D, 2025PolyBiharmonic, 2025VariationalBiharmonic}, is established based on the biharmonic equation $\Delta^{2}u=0$ and utilizes normal derivatives as a new control dimension, providing a wide range of deformation control effects.
\section{Background}\label{sec:3}
As the mathematical background of our method, we first present basic concepts of Green’s function of a domain (connected open set) $\Theta \subseteq \mathbb{R}^2$, denoted as $G_{\Theta}(\xi, \eta)$. Detailed introductions to this mathematical background can be found in PDE textbooks such as~\citep{1995PDEbook}; here, however, we only introduce the parts relevant to our application. We first denote the well-known fundamental solution of the 2D Laplace equation as $\Phi(\xi, \eta)$, which has the following expression:
\begin{equation}
\label{eq:fundamental}
\Phi(\xi, \eta) = \frac{1}{2\pi}\log{\|\xi - \eta\|}.
\end{equation}
Then, the Green's function $G_{\Theta}(\xi, \eta)$ of a domain $\Theta \subseteq \mathbb{R}^2$ is the unique function satisfying the following properties:
\begin{enumerate}[label=(\roman*), leftmargin=*]
\item $G_{\Theta}(\cdot, \eta) - \Phi(\cdot, \eta)$ is harmonic on $\Theta$ and continuous on $\overline{\Theta}$. \label{prop:harmonic}
\item $G_{\Theta}(\xi, \eta)=0$ for $\eta \in \Theta$ and $\xi\in \partial \Theta$. \label{prop:zero}
\end{enumerate}
Here, $G_{\Theta}(\cdot, \eta)$ denotes that we regard $G_{\Theta}$ as a function of $\xi$ for each fixed $\eta$. The above definition indicates that $G_{\Theta}(\cdot, \eta) - \Phi(\cdot, \eta)$ is the solution of the Dirichlet problem
\begin{equation}
\Delta w(\xi)=0 \ \mathrm{on} \ \Theta,\ \ w(\xi)=-\Phi(\xi, \eta) \ \mathrm{on} \ \partial \Theta
\end{equation}
and therefore unique. Since $G_{\Theta}(\xi, \eta)-\Phi(\xi, \eta)$ is harmonic on $\Theta$ with respect to $\xi$, it follows that $G_{\Theta}(\xi, \eta)$ satisfies $\Delta_{\xi} G_{\Theta}(\xi, \eta)=\delta(\xi-\eta)$ for $\xi, \eta \in \Theta$. However, $G_{\Theta}(\xi,\eta)$ is defined on $\overline{\Theta} \times \Theta$, and $G_{\Theta}(\xi,\eta)$ vanishes when $\xi \in \partial\Theta$, while $\Phi(\xi, \eta)$ is defined on the whole $\mathbb{R}^2 \times \mathbb{R}^2$.

Not all regions $\Theta$ admit an analytic solution for their Green’s function. However, if an analytic conformal map exists that maps $\Theta$ onto the unit disk or the upper half-plane, then the Green’s function for that region possesses an analytic solution. A commonly used case is when $\Theta$ is a disk of radius $R$, where the Green’s function admits an analytic solution. Furthermore, when $\Theta$ is a two-dimensional rectangular region, the Green’s function does not admit an analytic solution, but it admits a semi-analytic expression and can be approximated by a series summation~\cite{2006GreenRectangle}.

In the following, we provide the analytic solution for the case where $\Theta$ is a disk, as we will subsequently use it to achieve various deformation effects in our method. We also provide its derivation based on the method of images in Appendix C of the supplementary material. Furthermore, we directly present the semi-analytical expression for the Green’s function of a rectangular domain; for detailed computational methods, please refer to the mathematics literature~\cite{2006GreenRectangle}. For source cages with a significant discrepancy between their width and height, modeling $\Theta$ with a rectangular domain can better approximate the source cage, and thus enable more diverse deformation effects.

Let $B_R(\mathbf{0})\subset\mathbb{R}^2$ be the open disk centered at the origin with radius $R$. The analytic solution of the Green's function of $B_R(\mathbf{0})$ can be expressed as:
\begin{equation}
\label{eq:G_R}
\begin{aligned}
G_{R}(\xi, \eta) &= \frac{1}{2\pi}\log{\|\xi - \eta\|} - \frac{1}{4\pi}\log{\bigl(\|\xi\|^2\|\eta\|^2 - 2R^{2} (\xi \cdot \eta) + R^{4}\bigr)} \\
&+ \frac{1}{2\pi} \log {R}.
\end{aligned}
\end{equation}
We can derive the derivative of $G_{R}(\xi, \eta)$ directly as follows:
\begin{equation}
\label{eq:grad_G_R}
\nabla_\xi G_R(\xi,\eta) = \frac{1}{2\pi} \left( \frac{\xi - \eta}{\|\xi - \eta\|^2} - \frac{\|\eta\|^2 \xi - R^2 \eta}{\|\xi\|^2 \|\eta\|^2 - 2R^2(\xi\cdot\eta) + R^4} \right),
\end{equation}
which will be used later when establishing the domain-generalized Green’s third identity and domain-varying Green coordinates.

In addition to the disk domain, we will also use the semi-analytical expression of the Green’s function for the rectangular domain. Consider a rectangle $\Theta$ with $-w/2 \le x \le w/2, -h/2 \le y \le h/2$. The Green's function $G_{\Theta}(\xi, \eta)$ can be expressed as a double Fourier sine series. We denote $\xi=(\xi_1, \xi_2)^{\top}, \eta=(\eta_1,\eta_2)^{\top}$, and $\overline{\xi}_1=\xi_1+w/2,\overline{\eta}_1=\eta_1+w/2,\overline{\xi}_2=\xi_2+h/2, \overline{\eta}_2=\eta_2+h/2$. Then,
\begin{align}
\label{eq:rec_G}
G_{\Theta}(\xi, \eta) = \frac{4}{wh}\sum_{m=1}^{\infty}\sum_{n=1}^{\infty}
\frac{\sin\frac{m\pi\overline{\xi}_1}{w}\,\sin\frac{m\pi\overline{\eta}_1}{w}\;
      \sin\frac{n\pi\overline{\xi}_2}{h}\sin\frac{n\pi\overline{\eta}_2}{h}}
     {\left(\frac{m\pi}{w}\right)^2+\left(\frac{n\pi}{h}\right)^2}.
\end{align}
Additionally, the gradient of $G_{\Theta}(\xi_1,\xi_2;\eta_1,\eta_2)$ with respect to $\xi=(\xi_1, \xi_2)^{\top}$ can be expressed as 
$\nabla_{\xi}G_{\Theta}(\xi, \eta)=(\frac{\partial G_{\Theta}}{\partial \xi_1}, \frac{\partial G_{\Theta}}{\partial \xi_2})^{\top}$,
where
\begin{align}
\frac{\partial G_{\Theta}}{\partial \xi_1}= \frac{4\pi}{w^2 h}\sum_{m=1}^{\infty}\sum_{n=1}^{\infty}
\frac{m\cos\frac{m\pi \overline{\xi}_1}{w}\sin\frac{m\pi\overline{\eta}_1}{w}
      \sin\frac{n\pi\overline{\xi}_2}{h}\sin\frac{n\pi\overline{\eta}_2}{h}}
     {\left(\frac{m\pi}{w}\right)^2+\left(\frac{n\pi}{h}\right)^2},
\end{align}
\begin{align}
\frac{\partial G_{\Theta}}{\partial \xi_2}= \frac{4\pi}{w h^2}\sum_{m=1}^{\infty}\sum_{n=1}^{\infty}
\frac{n\sin\frac{m\pi \overline{\xi}_1}{w}\sin\frac{m\pi\overline{\eta}_1}{w}
      \cos\frac{n\pi\overline{\xi}_2}{h}\sin\frac{n\pi\overline{\eta}_2}{h}}
     {\left(\frac{m\pi}{w}\right)^2+\left(\frac{n\pi}{h}\right)^2}.
\end{align}
\section{Method}
Based on the background introduced previously, this section establishes the deformation formulation using 2D Green’s functions of diverse domains. We denote the domain of the Green’s function as $\Theta$, while the region enclosed by the cage is $\Omega$. Note that these are two independent sets satisfying $\Omega \subseteq \Theta \subseteq \mathbb{R}^2$. Fig.~\ref{fig:teaser} clearly illustrates these two regions as well as the relationship between the object, the domain and the cage.  

\subsection{Domain-Varying Green Coordinates (DVGC)}
Green’s identity is a well-known result in mathematics; furthermore, Green’s third identity can be employed to establish traditional Green coordinates~\cite{2008Green, 2009DriveGreen}. Our main observation is that the derivation of the classical Green’s third identity is primarily based on the property that $\Delta_{\xi}\Phi(\xi, \eta)=\delta(\xi-\eta)$, where $\Phi(\xi, \eta)$ denotes the fundamental solution of the Laplace equation. Consequently, when the domain $\Theta$ varies, since $G_{\Theta}(\xi, \eta) - \Phi(\xi, \eta)$ is harmonic with respect to $\xi$ on $\Theta$, it follows that $G_{\Theta}(\xi, \eta)$ also satisfies this condition, i.e., $\Delta_{\xi} G_{\Theta}(\xi, \eta)=\delta(\xi-\eta)$. Consequently, we can derive a corresponding boundary integral expression based on $G_{\Theta}(\xi, \eta)$ analogous to the classical Green’s third identity. Using this formula, we can define deformation coordinates similar to the classical Green coordinates. Furthermore, different domains $\Theta$ yield different deformation effects. We therefore refer to our method as Domain-Varying Green Coordinates (DVGC). We begin with the following domain-generalized Green’s identity as our theoretical foundation.
\begin{theorem}[Domain-generalized Green's Third Identity]
Let $\Theta \subseteq \mathbb{R}^{2}$ be a domain, and denote by $G_{\Theta}(\xi, \eta)$ the Green's function of $\Theta$. Moreover, let $\Omega \subseteq \Theta$ be a bounded open set contained in $\Theta$ with $\mathcal{C}^{2}$ boundaries. $u(\xi) \in \mathcal{C}^{2}(\overline{\Omega})$ is a harmonic function, i.e., $\Delta u = 0$ in $\Omega$. Then, for $\eta \in \Omega$, we have:
\begin{equation}
\label{eq:third_identity}
u(\eta)=\int_{\partial \Omega}{u(\xi) \frac{\partial G_{\Theta}}{\partial \mathbf{n}}(\xi, \eta) \ \mathrm{d} \sigma_{\xi}} -\int_{\partial \Omega}{G_{\Theta}(\xi,\eta)\frac{\partial u}{\partial \mathbf{n}}(\xi) \ \mathrm{d} \sigma_{\xi}}.
\end{equation}
The directional derivatives are taken along the outward normal. 
\end{theorem}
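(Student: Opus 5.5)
The plan is to split the Green's function into its singular part and a harmonic correction. We write $G_{\Theta}(\xi,\eta)=\Phi(\xi,\eta)+h_{\eta}(\xi)$, where by property \ref{prop:harmonic} the corrector $h_{\eta}:=G_{\Theta}(\cdot,\eta)-\Phi(\cdot,\eta)$ is harmonic on $\Theta$. The identity then follows by adding two facts:
\begin{enumerate}[label=(\alph*), leftmargin=*]
\item the classical Green's third identity for $\Phi$;
\item Green's second identity applied to the pair $(u,h_{\eta})$, whose boundary contributions cancel because both functions are harmonic.
\end{enumerate}

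\emph{Classical identity for $\Phi$.} Fix $\eta\in\Omega$ and remove a small disk $B_\varepsilon(\eta)\subset\Omega$. On $\Omega_\varepsilon=\Omega\setminus\overline{B_\varepsilon(\eta)}$ both $u$ and $\Phi(\cdot,\eta)$ are $\mathcal{C}^2$ and harmonic. Green's second identity therefore gives
\[
0=\int_{\partial\Omega}\Bigl(u\,\partial_{\mathbf n}\Phi-\Phi\,\partial_{\mathbf n}u\Bigr)\,d\sigma-\int_{\partial B_\varepsilon(\eta)}\Bigl(u\,\partial_{r}\Phi-\Phi\,\partial_{r}u\Bigr)\,d\sigma ,
\]
with $r$ the radial direction from $\eta$. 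On $\partial B_\varepsilon$ we have $\partial_r\Phi=1/(2\pi\varepsilon)$, so the first inner term is the average of $u$ over the circle, which tends to $u(\eta)$ by continuity. The second inner term is bounded by $\varepsilon|\log\varepsilon|\,\|\nabla u\|_\infty\to0$. Letting $\varepsilon\to0$ yields \eqref{eq:third_identity} with $\Phi$ in place of $G_\Theta$. The paper's normalization $\Phi=\frac{1}{2\pi}\log\|\xi-\eta\|$ and the outward-normal convention give exactly the signs in \eqref{eq:third_identity}.

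\emph{Correction term.} Since $h_\eta$ is harmonic on $\Theta\supseteq\overline{\Omega}$ and $u$ is harmonic in $\Omega$, Green's second identity on $\Omega$ gives
\[
0=\int_\Omega (u\Delta h_\eta-h_\eta\Delta u)\,d\xi=\int_{\partial\Omega}\bigl(u\,\partial_{\mathbf n}h_\eta-h_\eta\,\partial_{\mathbf n}u\bigr)\,d\sigma .
\]
Adding this to the classical identity and using $\Phi+h_\eta=G_\Theta$ gives the claimed formula.

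\emph{Main obstacle: regularity of $h_\eta$ up to $\partial\Omega$.} Applying Green's second identity needs $h_\eta\in\mathcal{C}^2$ on a neighborhood of $\overline{\Omega}$ (or at least $\mathcal{C}^1(\overline\Omega)$ together with $\mathcal C^2$ inside). The statement only assumes $\Omega\subseteq\Theta$, so $\partial\Omega$ may touch $\partial\Theta$; the case $\Theta=\Omega$, i.e.\ harmonic coordinates, is exactly of this type. Two cases arise.
\begin{itemize}
\item If $\overline\Omega\subset\Theta$, harmonicity makes $h_\eta$ smooth on a neighborhood of $\overline\Omega$, and everything above is immediate.
\item Otherwise, I would approximate $\Omega$ from inside by $\mathcal{C}^2$ domains $\Omega_k$ with $\overline{\Omega_k}\subset\Theta$ and $\Omega_k\uparrow\Omega$, for example via level sets of a regularized distance function. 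Apply the identity on each $\Omega_k$ and pass to the limit. This requires boundary regularity of $G_\Theta(\cdot,\eta)$ up to $\partial\Theta\cap\partial\Omega$, namely that $\nabla h_\eta$ extends continuously. That holds when the common boundary portion is $\mathcal{C}^2$, by standard Schauder/Kellogg boundary regularity for the Dirichlet problem with smooth data $-\Phi(\cdot,\eta)$, since $\eta$ lies away from the boundary.
\end{itemize}
If this extension proves delicate, I would instead state the proof under the standing assumption that $G_\Theta(\cdot,\eta)\in\mathcal{C}^1(\overline\Omega\setminus\{\eta\})$. That assumption holds for all cases used later: disks, rectangles away from corners, and $\Theta=\Omega$ with $\mathcal{C}^2$ boundary.
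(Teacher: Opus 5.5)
Your proof is correct and follows essentially the same route as the paper's argument (as its main text describes it): split $G_\Theta(\cdot,\eta)=\Phi(\cdot,\eta)+h_\eta$ with $h_\eta$ harmonic, recover $u(\eta)$ from the singular part by excising $B_\varepsilon(\eta)$ and letting $\varepsilon\to 0$, and dispose of the harmonic corrector with Green's second identity (equivalently, its contribution on $\partial B_\varepsilon(\eta)$ vanishes in the limit). Your discussion of the case where $\partial\Omega$ meets $\partial\Theta$, which is needed for $\Theta=\Omega$ in the HC theorem, is a worthwhile refinement; note, however, that the Kellogg step needs $\partial\Theta$ itself to be $\mathcal{C}^{1,\alpha}$ near the contact points (true for $\Theta=\Omega$, disks and rectangles, but false e.g.\ for a slit disk whose slit tip touches $\partial\Omega$), so your fallback hypothesis $G_\Theta(\cdot,\eta)\in\mathcal{C}^1(\overline{\Omega}\setminus\{\eta\})$ is the cleaner way to state it.
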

The detailed proof is provided in Appendix A of the supplementary material. Although Walk on Stars~\citep{2023WalkStars} also employs a BEM formulation built on Green’s functions across different domains to establish the Monte Carlo PDE solver, our method strictly proves the generalized Green’s identity and specifies its application in cage-based deformation. Using Eq.~\eqref{eq:third_identity}, we derive the deformation formula on oriented simplicial cages $P=(\mathbb{V}, \mathbb{T})$ by substituting $u(\eta)=\eta$ into Eq.~\eqref{eq:third_identity}, and discretizing the above formulation based on cage vertices $\{\mathbf{v}_i|i \in I_{\mathbb{V}}\}$ and normals $\{\mathbf{n}_j|j\in I_{\mathbb{T}}\}$, yielding natural linear reproduction properties as follows:
\begin{equation}
\label{eq:isotropic_GC_exp}
\eta = F(\eta, P)=\sum_{i \in I_{\mathbb{V}}}{\phi_{i}(\eta) \mathbf{v}_i} + \sum_{j \in I_{\mathbb{T}}}{\psi_{j}(\eta) \mathbf{n}_j}, \ \ \eta \in \Omega,
\end{equation}
where the coordinate functions $\phi_{i}(\eta)$ and $\psi_{j}(\eta)$ are expressed as:
\begin{align}
\phi_{i}(\eta) &= \int_{\xi \in \mathcal{N}\{\mathbf{v}_i\}}{\Gamma_{\mathbf{v}_i}(\xi) \frac{\partial G_{\Theta}}{\partial \mathbf{n}}(\xi, \eta) \ \mathrm{d} \sigma_{\xi} , i \in I_{\mathbb{V}}}, \label{eq:isotropic_phi}\\
\psi_{j}(\eta) &= -\int_{\xi \in f_j}{G_{\Theta}(\xi, \eta) \ \mathrm{d}\sigma_{\xi}, j\in I_{\mathbb{T}}}.\label{eq:isotropic_psi}
\end{align}
Here, $I_\mathbb{V}$ and $I_\mathbb{T}$ denote the index sets for the vertices and faces of the cage, respectively, and $\mathcal{N}\{\mathbf{v}_i\}$ refers to faces incident to the vertex $\mathbf{v}_i$. The term $\Gamma_{\mathbf{v}_i}(\xi)$ represents a piecewise linear hat function defined on $\mathcal{N}\{\mathbf{v}_i\}$, which takes the value of one at $\mathbf{v}_i$, zero at any other vertex, and varies linearly across each face associated with $\mathbf{v}_i$ (specifically, $f_{j-1}=\overrightarrow{\mathbf{v}_{j-1}\mathbf{v}_{j}}$ and $f_j=\overrightarrow{\mathbf{v}_j\mathbf{v}_{j+1}}$ in 2D). When the source cage $P$ is deformed into the target cage $\tilde{P}$, the updated position $\tilde{\eta}$ of $\eta$ in the deformed object is computed as:
\begin{equation}
\label{eq:deformed_formulation}
\tilde{\eta} = F(\eta, \tilde{P})=\sum_{i \in I_{\mathbb{V}}}{\phi_{i}(\eta) \tilde{\mathbf{v}}_i} + \sum_{j \in I_{\mathbb{T}}}{\psi_{j}(\eta) s_j\tilde{\mathbf{n}}_j},
\end{equation}
where $\tilde{\mathbf{v}}_i$ and $\tilde{\mathbf{n}}_j$ represent the vertices and normals of the deformed cage, respectively. Furthermore, $s_j=\|\tilde{f}_j\|/\|f_j\|$ denotes the ratio of the face length $f_j$ after deformation to that before deformation. This term is introduced to ensure the scale invariance of the deformation, as suggested by~\citet{2008Green}.
\subsection{Relations with Harmonic and Green Coordinates}\label{sec:4_2}
Harmonic Coordinates (HC) \cite{2006GeneralConstructBC, 2007Harmonic} and Green Coordinates (GC) \cite{2008Green, 2009DriveGreen} are well-established approaches in this field. Their deformation behaviors usually differ significantly because HC are formulated based solely on cage vertices, whereas GC incorporate cage normals. Moreover, HC typically lack a closed-form solution and requires solving the corresponding PDE using finite element or boundary element methods, whereas GC feature closed-form expressions. However, in this section, we demonstrate that they both belong to the DVGC we proposed, representing two special cases of DVGC. Specifically, when the Green's function domain $\Theta$ aligns with the cage domain $\Omega$, the deformation effects are mathematically equivalent to HC. Meanwhile, when $\Theta=\mathbb{R}^2$, the deformation effects are mathematically equivalent to GC. Therefore, as $\Theta$ gradually expands from $\Omega$ to $\mathbb{R}^2$, we may expect the deformation effects to transform from HC to GC, thereby achieving diverse deformation effects and introducing an additional parameter space to control the deformation from being ‘more consistent with the cage’ to ‘more shape-preserving’. We will propose the following two theorems to state this fact and provide rigorous proofs for them. We also provide further insights for the $\Theta=\mathbb{R}^2$ case in Appendix B of the supplementary material.

\begin{theorem}\label{thm:HC}
When $\Theta=\Omega$, DVGC coincide with HC. Specifically, the Neumann coordinates vanish $(\psi_{j}(\eta)\equiv 0)$, and the Dirichlet coordinates $\phi_i(\eta)=h_i({\eta})$, where $h_i(\eta)$ are harmonic coordinates satisfying
\begin{equation}
\Delta h_i=0 \ \text{in} \ \Omega,\qquad h_i|_{\partial\Omega}=\Gamma_{\mathbf{v}_i},
\end{equation}
where $\Gamma_{\mathbf{v}_i}$ is the piecewise linear hat function on the cage boundary.
\end{theorem}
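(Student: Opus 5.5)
The plan is to take $\Theta=\Omega$ and check the two coordinate families separately.

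\textbf{Neumann coordinates.} By property \ref{prop:zero} of the Green's function, $G_{\Omega}(\xi,\eta)=0$ whenever $\xi\in\partial\Omega$ and $\eta\in\Omega$. The cage faces $f_j$ together make up $\partial\Omega$, so the integrand in Eq.~\eqref{eq:isotropic_psi} vanishes identically. Hence $\psi_j(\eta)\equiv 0$ for every $\eta\in\Omega$. This step is immediate.

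\textbf{Dirichlet coordinates.} Let $h_i$ be the harmonic coordinate, i.e.\ the solution of the Dirichlet problem with boundary data $\Gamma_{\mathbf{v}_i}$, which is continuous and piecewise linear. I would apply the domain-generalized Green's third identity, Eq.~\eqref{eq:third_identity}, with $\Theta=\Omega$ and $u=h_i$. The second boundary integral drops out for the same reason as above, since $G_\Omega$ vanishes on $\partial\Omega$. What remains is the Poisson-kernel representation
\[
h_i(\eta)=\int_{\partial\Omega} \Gamma_{\mathbf{v}_i}(\xi)\,\frac{\partial G_\Omega}{\partial\mathbf{n}}(\xi,\eta)\,\mathrm{d}\sigma_\xi .
\]
The hat function $\Gamma_{\mathbf{v}_i}$ is supported on $\mathcal{N}\{\mathbf{v}_i\}$, so this integral is exactly $\phi_i(\eta)$ as defined in Eq.~\eqref{eq:isotropic_phi}. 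Therefore $\phi_i=h_i$. Substituting into Eq.~\eqref{eq:deformed_formulation} then gives $\tilde\eta=\sum_i h_i(\eta)\tilde{\mathbf{v}}_i$, which is the HC deformation.

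\textbf{Main obstacle: regularity.} The theorem as stated assumes a $\mathcal{C}^2$ boundary and $u\in\mathcal{C}^2(\overline{\Omega})$. A polygonal cage has corners, and $h_i$ is generally not $\mathcal{C}^2$ up to the boundary there. I see two ways to handle this.
\begin{enumerate}
\item Approximate $\Omega$ from inside by smooth domains $\Omega_\varepsilon\uparrow\Omega$. On each $\Omega_\varepsilon$, apply the identity with $G_{\Omega}$ restricted to $\Omega_\varepsilon$. One then has to control the Neumann term $\int_{\partial\Omega_\varepsilon} G_\Omega\,\partial_{\mathbf{n}} h_i$, using that $G_\Omega\to 0$ near $\partial\Omega$.
\item Rely on the standard fact that for a Lipschitz (polygonal) domain, the harmonic measure has density $\partial_{\mathbf{n}} G_\Omega(\cdot,\eta)$ with respect to arc length. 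This density represents the solution of the Dirichlet problem for any continuous boundary data, in particular the piecewise linear $\Gamma_{\mathbf{v}_i}$.
\end{enumerate}
I would state the second route, citing the Poisson representation of Dirichlet solutions, and keep the first as a justification sketch. As a consistency check, linear reproduction $\sum_i\phi_i(\eta)\mathbf{v}_i=\eta$ then follows from $\psi_j\equiv0$ and $\sum_i\Gamma_{\mathbf{v}_i}(\xi)\mathbf{v}_i=\xi$ on the boundary. This matches the known affine reproduction property of HC.
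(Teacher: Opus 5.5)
Your proposal is correct and follows the paper's proof essentially step for step: $\psi_j\equiv 0$ because $G_\Omega(\cdot,\eta)$ vanishes on $\partial\Omega$, and $\phi_i=h_i$ by applying the domain-generalized third identity to $h_i$, where the single-layer term drops out. Your regularity remark adds to the paper's argument rather than departing from it. The paper applies its identity, which is stated for $\mathcal{C}^2$ boundaries and $u\in\mathcal{C}^2(\overline{\Omega})$, to a polygonal cage without comment, and your second route closes that gap in the standard way, using the fact that $\frac{\partial G_\Omega}{\partial\mathbf{n}}(\cdot,\eta)$ is the Poisson kernel (harmonic-measure density) of a polygonal domain.
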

\begin{proof}
According to Property~\ref{prop:zero} of the Green's function introduced in Section~\ref{sec:3}, we have $G_\Omega(\xi,\eta)=0$ for $\xi\in\partial\Omega$ when $\
\Theta=\Omega$. Therefore,
\begin{equation}
\psi_j(\eta)=-\int_{\xi \in f_j} G_\Omega(\xi,\eta) \ \mathrm d\sigma_\xi=0.
\end{equation}
For the Dirichlet term $\phi_{i}(\eta)$, applying Eq.~\eqref{eq:third_identity} to the harmonic function $h_i$ and notice that $G_\Omega(\xi,\eta)=0$ for $\xi \in \partial\Omega$, we obtain
\begin{equation}
h_i(\eta)=\int_{\partial\Omega} h_i(\xi)\frac{\partial G_\Omega}{\partial\mathbf n}(\xi,\eta) \ \mathrm d\sigma_\xi
=\int_{\partial\Omega}\Gamma_{\mathbf{v}_i}(\xi)\frac{\partial G_\Omega}{\partial\mathbf n}(\xi,\eta) \ \mathrm d\sigma_\xi
=\phi_i(\eta).
\end{equation}
The second equality follows from the boundary condition $h_i|_{\partial\Omega}=\Gamma_{\mathbf{v}_i}$~\citep{2007Harmonic}. Therefore, the deformation formula of DVGC reduces to
\begin{equation}
\eta=\sum_i h_i(\eta)\mathbf{v}_i.
\end{equation}
Based on the uniqueness of the Dirichlet problem of the Laplace equation, we proved that DVGC are equivalent to HC at this case. 
\end{proof}

\begin{theorem}\label{thm:GC}
Let $\Theta=B_R(\mathbf{0})$ be a disk centered at the origin and let $\Omega \subset B_R(\mathbf{0})$. As $R \to +\infty$, the deformed shape obtained by DVGC converges pointwise to that obtained by GC. Consequently, GC is the limiting case $\Theta=\mathbb{R}^2$ of DVGC.
\end{theorem}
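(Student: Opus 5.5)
The plan is to compare $G_R$ with the fundamental solution $\Phi$ directly, term by term in the coordinate functions, and then pass to the limit in the deformation formula Eq.~\eqref{eq:deformed_formulation}. Throughout, $\eta\in\Omega$ is fixed and $\xi$ ranges over the compact set $\partial\Omega$, so all error terms can be made uniform in $\xi$.

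\textbf{Expansion of $G_R$.} Factor $R^4$ out of the logarithm in Eq.~\eqref{eq:G_R}:
\[
\log\bigl(\|\xi\|^2\|\eta\|^2-2R^2(\xi\cdot\eta)+R^4\bigr)=4\log R+\log\Bigl(1-\tfrac{2\,\xi\cdot\eta}{R^2}+\tfrac{\|\xi\|^2\|\eta\|^2}{R^4}\Bigr).
\]
This gives
\[
G_R(\xi,\eta)=\Phi(\xi,\eta)-\tfrac{1}{2\pi}\log R+\epsilon_R(\xi,\eta),
\]
where $|\epsilon_R|=O(R^{-2})$ uniformly for $\xi\in\partial\Omega$.

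In the same way, the second term of Eq.~\eqref{eq:grad_G_R} has numerator $O(R^2)$ and denominator of order $R^4$. Hence
\[
\nabla_\xi G_R=\nabla_\xi\Phi+O(R^{-2})
\]
uniformly on $\partial\Omega$.

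\textbf{Limits of the coordinates.} The Dirichlet coordinates $\phi_i$ in Eq.~\eqref{eq:isotropic_phi} depend only on $\partial G_R/\partial\mathbf n$, integrated against the bounded hat functions $\Gamma_{\mathbf v_i}$ over a set of finite length. They therefore converge to the GC vertex coordinates $\phi_i^{GC}(\eta)=\int\Gamma_{\mathbf v_i}\,\partial\Phi/\partial\mathbf n$.

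The Neumann coordinates in Eq.~\eqref{eq:isotropic_psi} do \emph{not} converge, because of the divergent constant:
\[
\psi_j(\eta)=\psi_j^{GC}(\eta)+\tfrac{1}{2\pi}\log R\,\|f_j\|+O(R^{-2}),\qquad \psi_j^{GC}(\eta)=-\int_{f_j}\Phi(\xi,\eta)\,\mathrm d\sigma_\xi .
\]
This divergence is the main obstacle. Pointwise convergence of the coordinates fails, so the claim must be about the deformed shape rather than the individual $\psi_j$.

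\textbf{Cancellation in the deformed shape.} Substituting into Eq.~\eqref{eq:deformed_formulation}, the divergent contribution is
\[
\tfrac{1}{2\pi}\log R\sum_j\|f_j\|\,s_j\,\tilde{\mathbf n}_j=\tfrac{1}{2\pi}\log R\sum_j\|\tilde f_j\|\,\tilde{\mathbf n}_j .
\]
The scale factor $s_j=\|\tilde f_j\|/\|f_j\|$ is exactly what produces this clean form. Each term $\|\tilde f_j\|\tilde{\mathbf n}_j$ is the edge vector $\tilde{\mathbf v}_{j+1}-\tilde{\mathbf v}_j$ rotated by $-90^\circ$. The edges of a closed polygon (or of a union of closed polygons, if $\partial\Omega$ has several components) sum to zero. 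Hence this term vanishes identically for every $R$; equivalently, $\oint_{\partial\tilde\Omega}\mathbf n\,\mathrm d\sigma=0$.

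What remains is
\[
F_R(\eta,\tilde P)=\sum_i\phi_i\tilde{\mathbf v}_i+\sum_j\bigl(\psi_j^{GC}+O(R^{-2})\bigr)s_j\tilde{\mathbf n}_j,
\]
which tends to the GC deformation $\sum_i\phi_i^{GC}\tilde{\mathbf v}_i+\sum_j\psi_j^{GC}s_j\tilde{\mathbf n}_j$ as $R\to+\infty$.

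The last check is that these $\phi_i^{GC},\psi_j^{GC}$, built from $\Phi=\frac{1}{2\pi}\log\|\xi-\eta\|$ with outward normals, coincide with the coordinates of \citet{2008Green}. That is a matter of matching sign and orientation conventions. Even if GC is written with a different additive constant in $\Phi$, the cancellation above shows that any such constant does not affect the deformed shape.
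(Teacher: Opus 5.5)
Your proposal is correct and follows essentially the same route as the paper. Both arguments expand $G_R=\Phi-\frac{1}{2\pi}\log R+O(R^{-2})$ and $\nabla_\xi G_R=\nabla_\xi\Phi+O(R^{-2})$. Both observe that the Dirichlet coordinates converge while each $\psi_j$ picks up the divergent term $\frac{\|f_j\|}{2\pi}\log R$. Both then remove that term using $\sum_j \|f_j\|s_j\tilde{\mathbf n}_j=\sum_j\|\tilde f_j\|\tilde{\mathbf n}_j=\mathbf 0$ for the closed target cage.
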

\begin{proof}
According to Eqs.~\eqref{eq:G_R} and~\eqref{eq:grad_G_R}, when $R \to +\infty$,
\begin{equation}
\begin{aligned}
G_R(\xi,\eta)&=\Phi(\xi,\eta)-\frac{1}{4\pi}\log{(R^4(1+O(R^{-2})))} + \frac{1}{2\pi}\log{R} \\
&=\Phi(\xi,\eta)-\frac{1}{2\pi}\log{R}+O(R^{-2}),
\label{eq:convergence_G_R}
\end{aligned}
\end{equation}
and
\begin{equation}
\frac{\partial G_R}{\partial\mathbf n}(\xi, \eta)=\frac{\partial \Phi}{\partial\mathbf n}(\xi, \eta)+O(R^{-2}).
\label{eq:convergence_pG_R_pn}
\end{equation}
The Dirichlet term $\phi_i^{(R)}(\eta)$ and Neumann term $\psi_j^{(R)}(\eta)$ of DVGC with $\Theta=B_R(\mathbf{0})$ can be expressed as
\begin{align}
\phi_i^{(R)}(\eta) &= \int_{\xi \in \mathcal{N}\{\mathbf{v}_i\}}\Gamma_{\mathbf{v}_i}(\xi)\frac{\partial G_R}{\partial\mathbf n}(\xi,\eta) \ \mathrm{d}\sigma_{\xi},\\
\psi_j^{(R)}(\eta) &= -\int_{\xi \in f_j}G_R(\xi,\eta) \ \mathrm{d}\sigma_{\xi}.
\end{align}
The corresponding coordinates of the original GC are denoted as $\phi_i^{(GC)}(\eta)$ and $\psi_j^{(GC)}(\eta)$, respectively. According to Eq.~\eqref{eq:convergence_pG_R_pn}, we obtain
\begin{equation}\label{eq:phi_conv}
\lim_{R \to +\infty}\phi_i^{(R)}(\eta)=\phi_i^{(\mathrm{GC})}(\eta) \ \text{for} \ \eta \in \Omega.
\end{equation}
For the Neumann term, integrating Eq.~\eqref{eq:convergence_G_R} over $f_j$ yields
\begin{equation}
\label{eq:psi_asym}
\psi_j^{(R)}(\eta)=\psi_j^{(\mathrm{GC})}(\eta)+\frac{\|f_j\|}{2\pi}\log R+O(R^{-2}).
\end{equation}
The deformed position of DVGC is therefore given by
\begin{equation}
\tilde{\eta}^{(R)}=\sum_{i \in I_{\mathbb{V}}} \phi_i^{(R)}(\eta)\tilde{\mathbf{v}}_i+\sum_{j \in I_{\mathbb{T}}} \psi_j^{(R)}(\eta) s_j \tilde{\mathbf{n}}_j.
\end{equation}
Substituting~\eqref{eq:phi_conv} and~\eqref{eq:psi_asym} into this equation, we obtain
\small{
\begin{equation}
\tilde{\eta}^{(R)}
=\sum_{i \in I_{\mathbb{V}}} \phi_i^{(\mathrm{GC})}(\eta)\tilde{\mathbf v}_i
+\sum_{j \in I_{\mathbb{T}}} \psi_j^{(\mathrm{GC})}(\eta)s_j\tilde{\mathbf n}_j
+\frac{\log R}{2\pi}\sum_{j \in I_{\mathbb{T}}} \|f_j\|s_j\tilde{\mathbf n}_j
+O(R^{-2}).
\end{equation}}
Below, we demonstrate that the term $\sum_{j \in I_{\mathbb{T}}} \|f_j\|s_j\tilde{\mathbf{n}}_j
$ vanishes. We first observe that $\sum_{j \in I_{\mathbb{T}}} \|f_j\|s_j\tilde{\mathbf{n}}_j
=\sum_{j \in I_{\mathbb{T}}} \|\tilde{f}_j\|\tilde{\mathbf n}_j$ according to the definition $s_j=\|\tilde{f}_j\|/\|f_j\|$,
where $\|\tilde{f}_j\|$ is edge length of the deformed cage and $\tilde{\mathbf n}_j$ is the corresponding outward normal vector. Let $\tilde{\mathbf{t}}_j$ denote the unit vector of $\tilde{f}_j$ in the counterclockwise direction. We easily obtain $\sum_{j \in I_{\mathbb{T}}} \|\tilde{f}_j\|\tilde{\mathbf t}_j = \mathbf{0}$, as the target cage is closed. Since $\tilde{\mathbf n}_j=\tilde{\mathbf t}_j^{\perp}$, where $\perp$ represents a vector with 90-degree clockwise rotation, we have $\sum_{j \in I_{\mathbb{T}}} \|f_j\|s_j\tilde{\mathbf{n}}_j
=\sum_{j \in I_{\mathbb{T}}} \|\tilde{f}_j\|\tilde{\mathbf n}_j=\mathbf{0}$, and therefore
\begin{equation}
\lim_{R \to +\infty}\tilde{\eta}^{(R)}
=\sum_{i \in I_{\mathbb{V}}} \phi_i^{(\mathrm{GC})}(\eta)\tilde{\mathbf v}_i
+\sum_{j \in I_{\mathbb{T}}} \psi_j^{(\mathrm{GC})}(\eta)s_j\tilde{\mathbf n}_j,
\end{equation}
which is precisely the deformation formula of GC.
\end{proof}

Based on the analysis above, we can expect that as $\Theta$ transitions from $\Omega$ to $\mathbb{R}^2$, our method can generate a family of deformation effects distinct from both HC and GC. However, not all $\Theta$ possess an analytic Green’s function. Therefore, we employ the analytical and semi-analytical expressions introduced in Section~\ref{sec:3} for $\Theta$ being a disk and $\Theta$ being a rectangle, respectively. By substituting these into Eqs.~\eqref{eq:isotropic_phi} and~\eqref{eq:isotropic_psi}, we establish the deformation coordinates, and further can use Eq.~\eqref{eq:deformed_formulation} to compute the deformed shape. We will demonstrate in the next section that when $\Theta$ is a disk, our method possesses a closed-form solution and can therefore be calculated rapidly without numerical integration.

\subsection{Closed-form DVGC expression for disk Green’s functions}
This section provides a detailed derivation of the closed-form expression of DVGC for disk Green’s functions, i.e., $\Theta=B_R(\mathbf{0})$. In this case, the Green’s function $G_{R}(\xi, \eta)$ admits an analytic formulation, as illustrated in Eq.~\eqref{eq:G_R}. We further demonstrate that the deformation coordinates $\phi_{i}(\eta)$ and $\psi_{j}(\eta)$ also admit a closed-form solution. Specifically, the integrals in Eqs.~\eqref{eq:isotropic_phi} and \eqref{eq:isotropic_psi} possess explicit expressions and do not require numerical integration.

The 2D cage $\mathbb{P}=(\mathbb{V}, \mathbb{T})$ is defined as a discrete, closed, and oriented one-manifold composed of polygonal chains. The region bounded by $\mathbb{P}$ is denoted by $\Omega$, where $\Omega \subseteq B_R(\mathbf{0}) \subseteq \mathbb{R}^2$. We consider the evaluation of the integrals $\phi_{i}(\eta)$ and $\psi_{j}(\eta)$ over a single line segment $f_j=\overrightarrow{\mathbf{v}_j\mathbf{v}_{j+1}}$ of the cage, parametrized by
$\xi(t)=\mathbf{a}+t\mathbf{b}$ for $t\in[0,1]$, where $\mathbf{a}=\mathbf{v}_j,\mathbf{b}=\mathbf{v}_{j+1}-\mathbf{v}_j$. Consequently, $\mathrm{d}\sigma_{\xi} = \|\mathbf{b}\| \ \mathrm{d}t$. Fig.~\ref{fig:Figure2} presents an example of a 2D cage and its mathematical notations.
\begin{figure}[htb]
  \centering
  \includegraphics[width=0.6\linewidth]{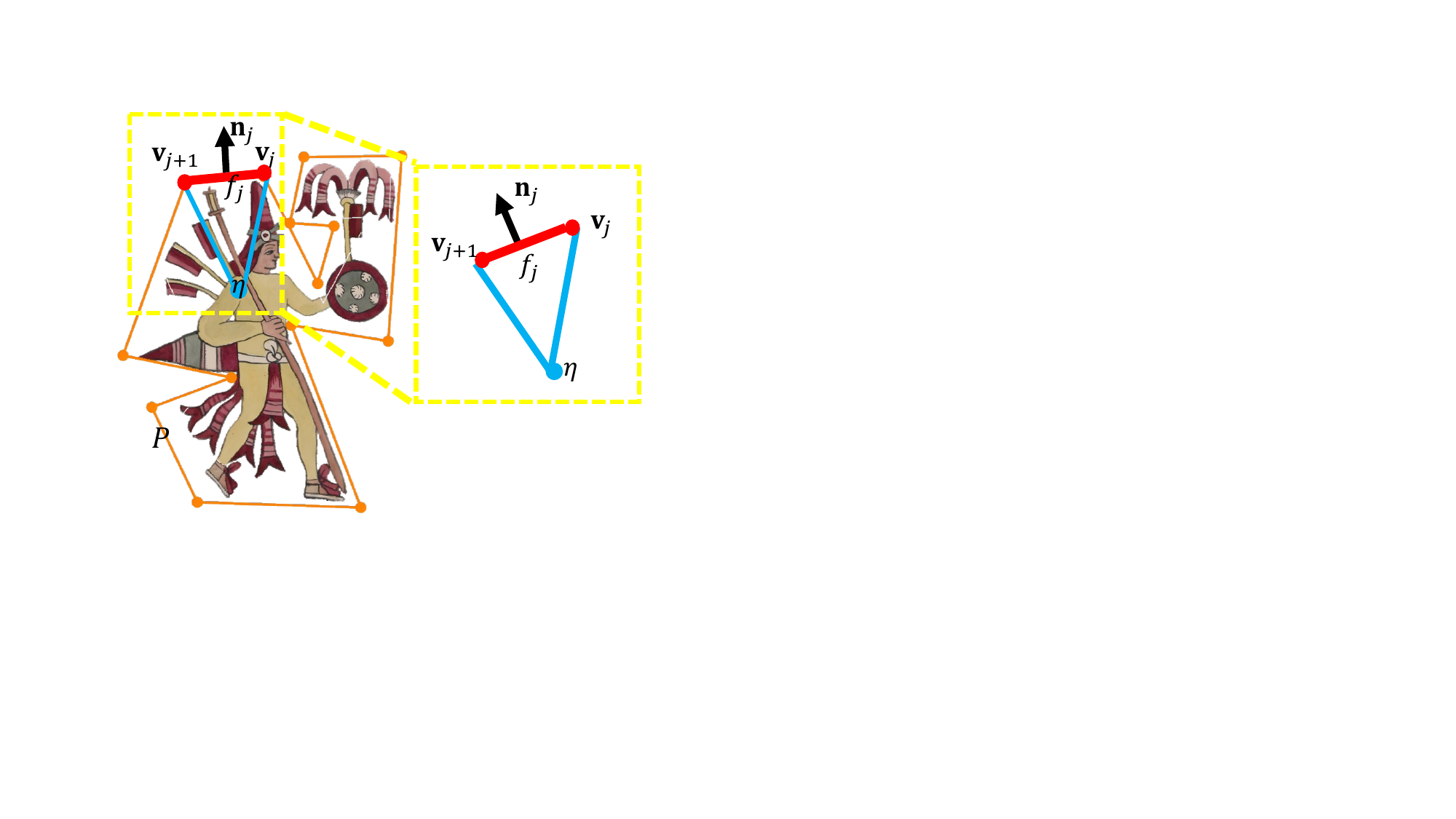}
  \caption{\label{fig:Figure2}
          The illustration of a 2D cage and its mathematical notations. }
\end{figure}

In the following, we will frequently use the following two anti-derivative expressions. Let $Q(t)$ be a quadratic polynomial defined as
\begin{equation}
Q(t)=\alpha t^2 + 2\beta t + \gamma, \qquad \alpha, \beta, \gamma \in \mathbb{R}.
\end{equation}
Furthermore, denote
\begin{equation}
\qquad s=\sqrt{\alpha \gamma - \beta^2}.
\end{equation}
Then,
\begin{align}
\int \log Q(t) \ \mathrm{d}t
&= \Big(t+\frac{\beta}{\alpha}\Big)\log Q(t)
-2\Big(t+\frac{\beta}{\alpha}\Big) \notag  \\
&+2\frac{s}{\alpha}\arctan\Big(\frac{\alpha t+\beta}{s}\Big) + C, \label{eq:anti_derivative1} \\
\int\frac{p t+q}{Q(t)}\ \mathrm{d}t
&= \frac{p}{2\alpha}\log Q(t)
+ \frac{q-\frac{p\beta}{\alpha}}{s}\arctan\Big(\frac{\alpha t+\beta}{s}\Big) + C.
\label{eq:anti_derivative2}
\end{align}
We transform the integration calculation of deformation coordinates into the aforementioned formula, and subsequently derive the closed-form expression. We first focus on the Neumann term $\psi_j(\eta)$, where
\begin{equation}
\psi_{j}(\eta)=-\int_{\xi \in f_j} G_R(\xi,\eta) \ \mathrm{d}\sigma_\xi
= -\|\mathbf{b}\|\int_{0}^{1} G_R(\mathbf{a}+t\mathbf{b},\eta) \ \mathrm{d}t.
\end{equation}
Define
\begin{equation}
\alpha_q = \|\mathbf{b}\|^2,\qquad
\beta_q = (\mathbf{a}-\eta) \cdot \mathbf{b},\qquad
\gamma_q = \|\mathbf{a}-\eta\|^2.
\end{equation}
Then
\begin{equation}
Q(t)=\|\xi(t)-\eta\|^2 = \alpha_q t^2 + 2\beta_q t + \gamma_q.
\end{equation}
Furthermore, define
\begin{equation}
\begin{gathered}
\alpha_s = \|\eta\|^2\|\mathbf{b}\|^2,\quad
\beta_s = \|\eta\|^2 (\mathbf{a}\cdot \mathbf{b}) - R^2 (\mathbf{b}\cdot\eta),\quad \\
\gamma_s = \|\eta\|^2 \|\mathbf{a}\|^2 - 2R^2(\mathbf{a}\cdot\eta) + R^4.
\end{gathered}
\end{equation}
Then
\begin{equation}
S(t)=\|\xi(t)\|^2\|\eta\|^2 - 2R^{2} (\xi(t) \cdot \eta) + R^{4}=\alpha_s t^2 + 2\beta_s t + \gamma_s,
\end{equation}
and we obtain
\begin{equation}
\psi_{j}(\eta)= -\|\mathbf{b}\|\int_0^1 (\frac{1}{4\pi} \log{Q(t)}-\frac{
1}{4\pi} \log{S(t)}+\frac{1}{2\pi} \log R) \ \mathrm{d}t.
\end{equation}
Using the antiderivative formula (Eq.~\eqref{eq:anti_derivative1}), and define
\begin{equation}
s_q = \sqrt{\alpha_q\gamma_q-\beta_q^2}, \quad s_s = \sqrt{\alpha_s\gamma_s-\beta_s^2},
\end{equation}
we obtain the closed-form solution of $\psi_j(\eta)$ as
\begin{equation}
\psi_j(\eta)
= -\frac{\|\mathbf{b}\|}{4\pi}[F_Q(1)-F_Q(0)]
+\frac{\|\mathbf{b}\|}{4\pi}[F_S(1)-F_S(0)]-\frac{\|\mathbf{b}\|}{2\pi} \log{R},
\end{equation}
where
\begin{equation}
\begin{aligned}
F_Q(t) &= \Big(t+\frac{\beta_q}{\alpha_q}\Big)\log Q(t)
-2\Big(t+\frac{\beta_q}{\alpha_q}\Big)
+2\frac{s_q}{\alpha_q}\arctan\Big(\frac{\alpha_q t+\beta_q}{s_q}\Big),\\
F_S(t) &= \Big(t+\frac{\beta_s}{\alpha_s}\Big)\log S(t)
-2\Big(t+\frac{\beta_s}{\alpha_s}\Big)
+2\frac{s_s}{\alpha_s}\arctan\Big(\frac{\alpha_s t+\beta_s}{s_s}\Big).
\end{aligned}
\end{equation}

Next, we focus on the calculation of the Dirichlet term $\phi_{\mathbf{v}_j}(\eta)$ for the vertex $\mathbf{v}_j$. This value is obtained by summing the contributions of its two associated edges. We concentrate on one of these edges and denote by $\phi_{\mathbf{v}_j, f_j}(\eta)$ the contribution of $f_j$ to $\phi_{j}(\eta)$. Then,
\begin{equation}
\begin{aligned}
\phi_{\mathbf{v}_j, f_j}(\eta)
=& \int_{\xi \in f_j} \Gamma_{\mathbf{v}_j, f_j}(\xi)\frac{\partial G_R}{\partial \mathbf{n}}(\xi,\eta) \ \mathrm{d}\sigma_\xi \\
=& \|\mathbf{b}\|\int_0^1 (1-t) \nabla_\xi G_R(\mathbf{a}+t\mathbf{b},\eta) \cdot \mathbf{n}_{j}\ \mathrm{d}t,
\end{aligned}
\end{equation}
where $\mathbf{n}_{j}$ is the outward normal of $f_j$. Given that $\mathbf{b} = \mathbf{v}_{j+1} - \mathbf{v}_{j}$, we have $\mathbf{b}\cdot \mathbf{n}_j=0$. Denoting
\begin{equation}
u_q=(\mathbf{a}-\eta) \cdot \mathbf{n}_j, \quad u_s=\|\eta\|^2 (\mathbf{a} \cdot \mathbf{n}_j) - R^2 (\eta \cdot \mathbf{n}_j).
\end{equation}
Then,
\begin{equation}
\phi_{\mathbf{v}_j, f_j}(\eta)=\frac{\|\mathbf{b}\|}{2\pi}(u_q\int_{0}^{1}{\frac{1-t}{Q(t)}} \ \mathrm{d}t-u_s\int_{0}^{1}{\frac{1-t}{S(t)}} \ \mathrm{d}t).
\end{equation}
Using the antiderivative formula (Eq.~\eqref{eq:anti_derivative2}) and denoting
\begin{equation}
\begin{aligned}
U_Q(t)
&= -\frac{u_q}{2\alpha_q}\log Q(t)
+ \frac{u_q\Big(1+\dfrac{\beta_q}{\alpha_q}\Big)}{s_q}\arctan\Big(\frac{\alpha_q t+\beta_q}{s_q}\Big),\\
U_S(t)
&= -\frac{u_s}{2\alpha_s}\log S(t)
+ \frac{u_s\Big(1+\dfrac{\beta_s}{\alpha_s}\Big)}{s_s}\arctan\Big(\frac{\alpha_s t+\beta_s}{s_s}\Big),
\end{aligned}
\end{equation}
we obtain
\begin{equation}
\phi_{\mathbf{v}_j, f_j}(\eta)
= \frac{\|\mathbf{b}\|}{2\pi}[U_Q(1)-U_Q(0)]-\frac{\|\mathbf{b}\|}{2\pi}[U_S(1)-U_S(0)].
\end{equation}
Then, we complete the derivation of the closed-form expression of DVGC for Green’s functions of disks.

\section{Experiments}
\begin{figure*}[htb]
  \centering
  \includegraphics[width=0.8\linewidth]{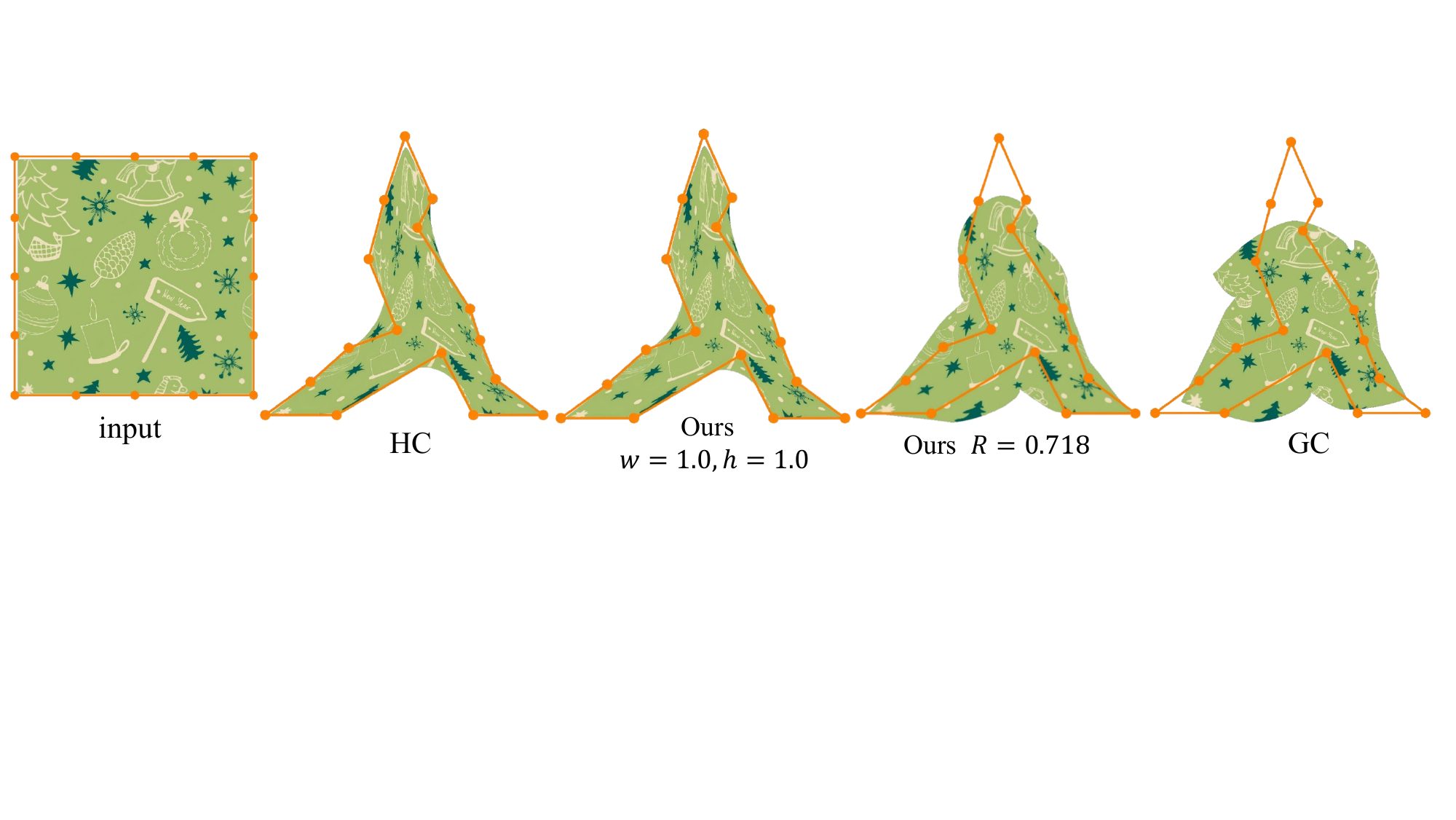}
  \caption{\label{fig:Figure3}
          Deformation results of HC and GC, as well as our method using minimal rectangular and disk Green’s function domains that enclose the cage.}
\end{figure*}
In this section, we present the experimental results of DVGC (our method) and compare with HC~\citep{2007Harmonic} and GC~\citep{2008Green}. All experiments are conducted on a standard laptop with an Intel CPU running at 2.40 GHz and 16GB of RAM. Section~\ref{sec:5_1} presents the results using Green’s functions of disks, while Section~\ref{sec:5_2} focuses on those using the Green’s function of rectangles. In Appendix D of the supplementary material, we provide further comparisons with Mean Value Coordinates (MVC)~\citep{2003MVCPoly} and Biharmonic Coordinates (BiC)~\citep{2012Biharmonic}.

\subsection{Deformation results using Green’s function of disks}\label{sec:5_1}

Fig.~\ref{fig:Figure4} shows the deformation results of HC and GC, as well as our method evaluated with disks of varying radii for the Green’s function domain, all using the same source and target cages. To ensure scale consistency, we first compute the axis-aligned bounding box of the cage, and normalize the cage and the input object such that the longer side of the width and height of the bounding box is scaled to $1.0$, with the origin placed at the center of the bounding box. We display the input object, source cage, target cage, and the results of HC, GC, and our proposed method for various inputs. For our method, we present results for three radii $R$ arranged in ascending order (left to right). The leftmost result of our method corresponds to the smallest radius enclosing the cage, while the other two illustrate the outcomes using progressively larger radii. It can be observed that the deformation results of HC align the target cage closely, while GC produce shape preserving effects and is strictly angle preserving in 2D. On the other hand, our method produces diverse deformation effects. As $R$ increases, the results gradually transition from HC to GC. Although the deformed shape of HC aligns closely with the target cage, while this seems desirable, the cage typically serves as a coarse simplicial representation of the input object. Consequently, the HC deformation often induces faceted creasing or distinct piecewise-linear effects. In contrast, GC can preserve some of the smooth structure of the original input; however, its deformation result may deviate noticeably from the target cage. Our method effectively balances these two characteristics. The cage of the moon-shaped example constitutes a non-convex region. When a convex Green’s function domain (disk) is employed, the areas near the convex boundaries approximate HC more closely, whereas areas near the concave boundaries remain close to GC. However, strict adherence to the cage is not always desirable. For instance, the HC results of this example suffer from significant shearing artifacts and alter the smiley face. In contrast, our method effectively preserves the shape and produces more natural deformation, while staying closer to the target cage than GC.

Table~\ref{table1} presents the running times (in seconds) of different approaches for the examples shown in Fig.~\ref{fig:Figure4}. Here, $|\mathcal{V}|$ denotes the number of vertices in the cage, and $|\mathcal{P}|$ represents the number of pixels requiring computation for the deformation of our method. HC are computed by discretizing the Laplace equation on the image grid using a five-point finite-difference scheme and solving the resulting sparse linear system, whereas GC and our method utilize a closed-form solution. The GC computation is a subset of our method, comprising only the first term of Eq.~\eqref{eq:G_R}. In contrast to HC, our method employs a boundary integral representation. 
Consequently, our method requires computation only at points of interest. For instance, in the sketch model of Exp.5 of the table, it is sufficient to compute solely the points on the sketch. In contrast, HC still necessitate finite element discretization of the region enclosed by the cage; thus, our method demonstrates a clear time advantage. For other examples, the time advantage of our method is not obvious because it requires computation for every pixel, whereas HC can directly implement finite element discretization using the image grid. Extending our method to 3D and deriving a closed-form solution in the future would confine calculations to the object vertices, which reside on a piecewise 2-manifold. In contrast, HC would still require finite element discretization of the entire 3D space. 

\begin{table}[t]
\centering
\caption{Running time (in seconds) of different approaches of the examples in Fig.~\ref{fig:Figure4}. Here, $|\mathcal{V}|$ denotes the number of vertices in the cage, and $|\mathcal{P}|$ represents the number of pixels requiring computation for the deformation.}
\label{table1}
\begin{tabular}{ccccccc}
\hline
Model & $|\mathcal{V}|$ & $|\mathcal{P}|$ & HC & GC & Ours \\
\hline

Exp. 1 & 13 & $1.14 \times 10^{6}$ & 5.40 & 2.99 & 7.97 \\
Exp. 2 & 26 & $4.14 \times 10^{5}$ & 2.40 & 1.98 &  5.81 \\
Exp. 3 & 24 & $5.54 \times 10^{5}$ & 3.27 &  2.42 & 7.16\\
Exp. 4 & 16 & $2.37 \times 10^{5}$ & 1.46 & 0.70 & 1.21\\
Exp. 5 & 16 & $7.22 \times 10^{4}$ & 1.32 & 0.14 & 0.34\\
Exp. 6 & 7 & $2.86 \times 10^{5}$ & 1.89 & 0.36 & 1.11 \\

\hline
\end{tabular}
\label{table_MAP}
\end{table}

\subsection{Deformation results using Green’s function of rectangles}\label{sec:5_2}
Disk Green’s functions are not always suitable for all source cages, especially when the width and height of the bounding box differ significantly. In such cases, adopting the smallest radius $R$ yields results that remain very similar to GC. Therefore, we employ the semi-analytical expression of the Green’s function for a rectangular region introduced in Eq.~\eqref{eq:rec_G} to achieve diverse deformation effects in this case. We take $m=n=500$ in Eq.~\eqref{eq:rec_G} and use the Gauss-Legendre quadrature with 96 integration points to compute $\phi_{i}(\eta)$ and $\psi_j(\eta)$ for DVGC in Eqs.~\eqref{eq:isotropic_phi} and~\eqref{eq:isotropic_psi}. We also add a small Gaussian kernel to the coordinates of neighboring pixels to avoid aliasing. Numerical integration completes calculations within an acceptable time (the longest example of Fig.~\ref{fig:Figure7} takes approximately 31 seconds). We normalize the cage and the input object such that the longer side of the cage bounding box is scaled to $1.0$. Fig.~\ref{fig:Figure3} shows the deformation results of HC, GC, as well as our method using minimal rectangular and disk Green’s function domains that enclose the cage. It can be observed that the results of HC and our method with the minimum rectangle yield almost identical effects, as this example satisfies $\Omega=\Theta$. This experimentally validates the theoretical correctness analyzed in Section~\ref{sec:4_2}.
In Fig.~\ref{fig:Figure7}, we mainly focus on the rectangle Green’s function and demonstrate two results of our method in each example: one uses the smallest rectangle enclosing the cage, and the other uses a slightly larger rectangle as the Green’s function domain. Our method yields deformation effects distinct from those of HC and GC. In the tower model, HC create a large sharp shape at the cage bends, while GC remain far from the target cage. Our method successfully avoids the weaknesses of both approaches. For the fish model, when $w=1.0, h=0.111$, it also satisfies $\Omega=\Theta$, further verifying the correctness of our theory.

\section{Conclusion}
In this work, we propose Domain-Varying Green Coordinates (DVGC) for 2D cage-based deformation and provide a unified framework that bridges Harmonic Coordinates (HC) and Green Coordinates (GC) using a domain-generalized Green’s third identity. When the domain $\Theta$ of the Green's function expands from the cage-enclosed region $\Omega$ to the full plane $\mathbb{R}^2$, the deformation effect continuously transitions from HC to GC. Consequently, we offer a new control space for the deformation effect, enabling it to be either better aligned with the target cage or more shape-preserving.

Nevertheless, our method has several limitations. The current implementation is restricted to 2D. While the underlying theory can be fully extended to 3D scenarios, deriving closed-form expressions for $\phi_{i}(\eta)$ and $\psi_{j}(\eta)$ is non-trivial for triangle mesh cages. Another limitation is the lack of closed-form expressions when the domain is not a disk. Although certain cases, such as rectangles, possess semi-analytical expressions, the calculation remains approximative and relies on numerical integration to compute the deformation coordinates, leading to trade-offs between time and accuracy. Nevertheless, we believe that we establish a theoretical framework for taking into account the domain of Green’s functions. Given that extensive explorations of Green's functions have been conducted in the mathematical field, we believe that our method paves the way for future research.

\bibliographystyle{ACM-Reference-Format}
\bibliography{sample-base}

\begin{figure*}[htb]
  \centering
  \includegraphics[width=1.0\linewidth]{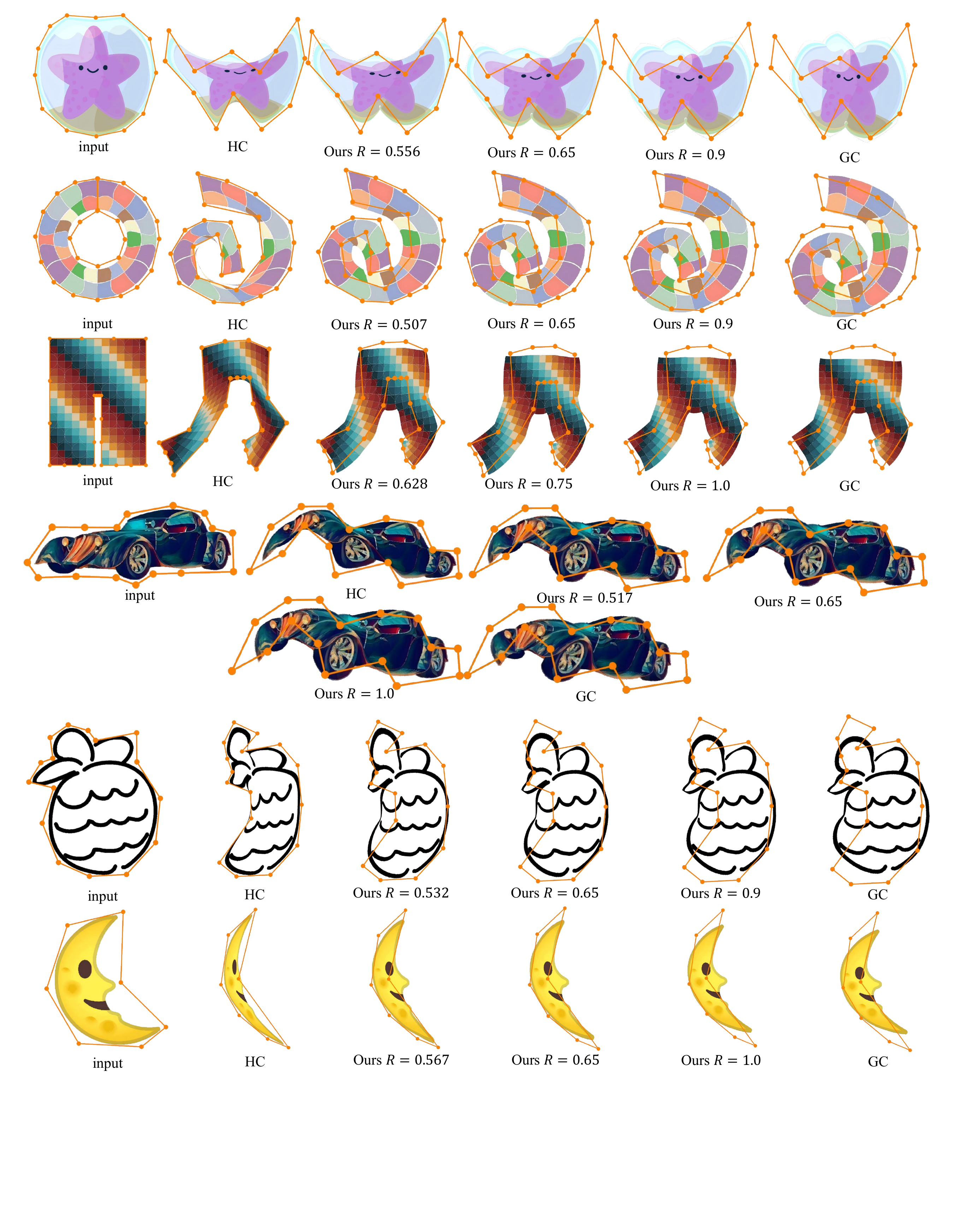}
  \caption{\label{fig:Figure4}
          Deformation results of HC and GC, as well as our method with different radii $R$ of disk Green's functions. The leftmost result of our method corresponds to the smallest radius enclosing the cage.}
\end{figure*}

\begin{figure*}[htb]
  \centering
  \includegraphics[width=1.0\linewidth]{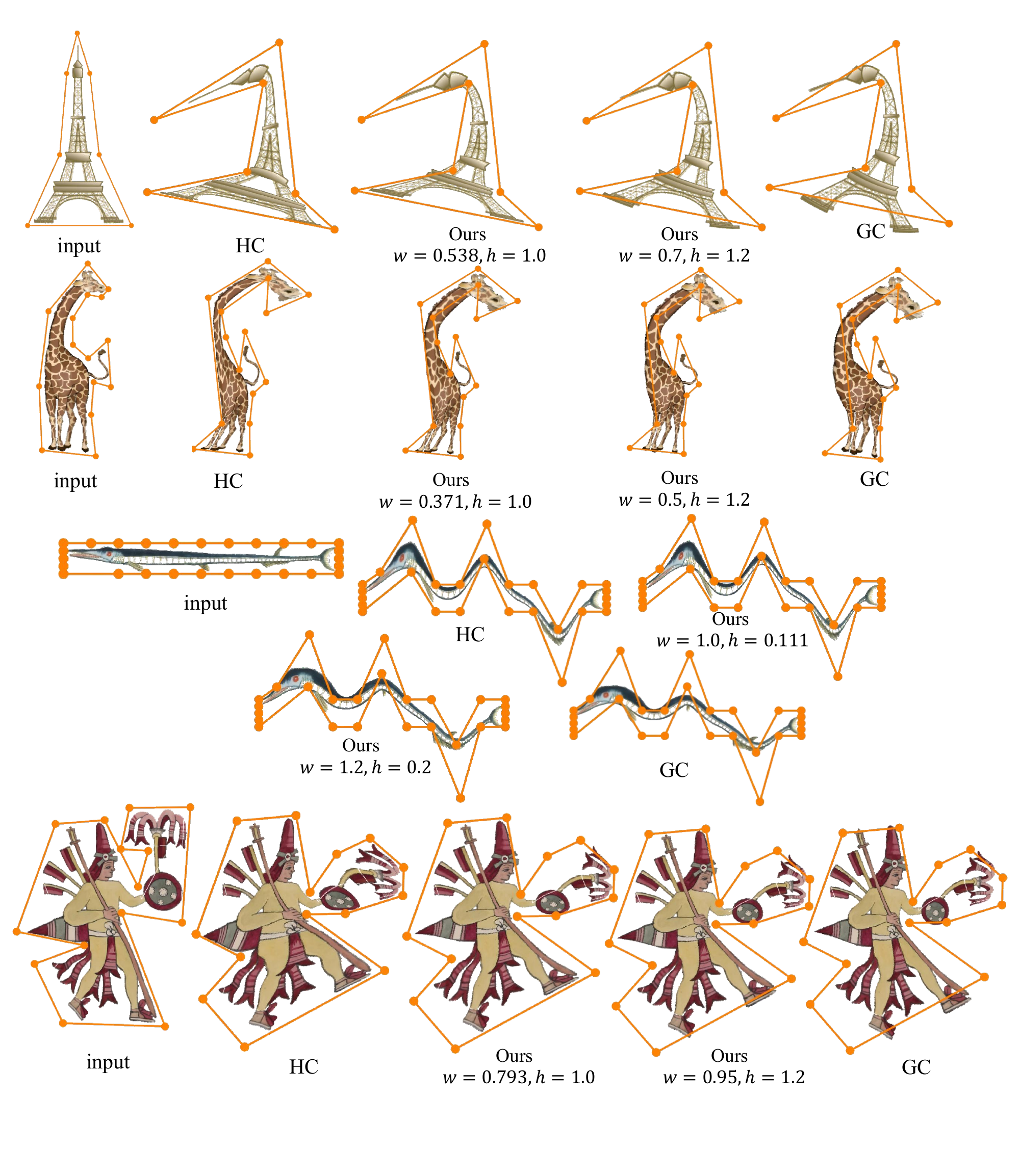}
  \caption{\label{fig:Figure7}
          Deformation results of HC and GC, as well as our method with different height $h$ and width $w$ of rectangle Green's functions. The leftmost result of our method corresponds to the smallest rectangle enclosing the cage.}
\end{figure*}


\clearpage
\includepdf[pages=-]{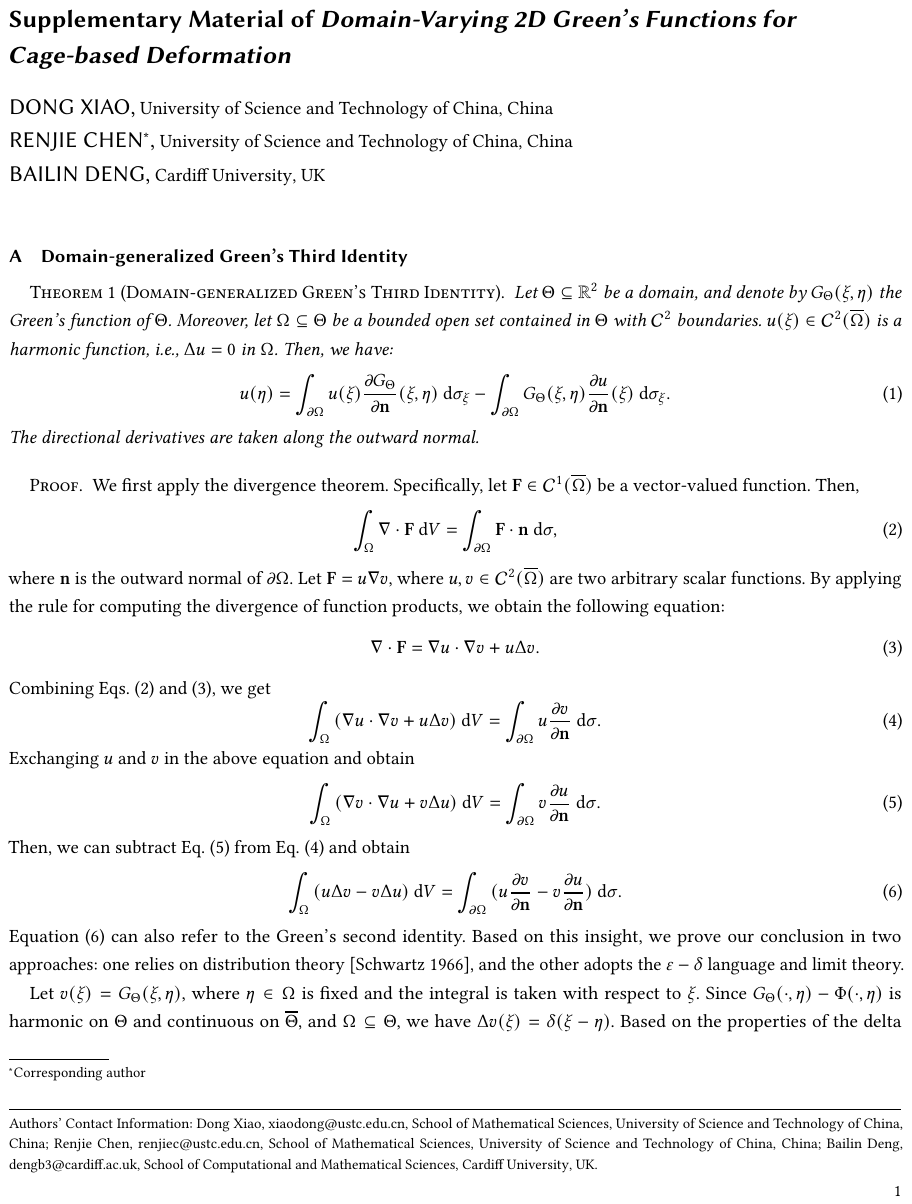}
\end{document}